\documentclass[reprint,longbibliography,
 amsmath,amssymb,
 prl,superscriptaddress]{revtex4-2}
\usepackage{graphicx}
\usepackage{dcolumn}
\usepackage{bm}
\usepackage[colorlinks=true, citecolor=blue, linkcolor=blue, urlcolor=blue]{hyperref}
\usepackage{xcolor}
\usepackage{amsmath, amssymb, amsfonts}
\usepackage{mathtools}
\usepackage{subfigure}
\usepackage{mathrsfs}
\usepackage{lipsum}
\usepackage{sidecap}
\usepackage{verbatim}
\usepackage{multirow}
\usepackage{amsthm}
\theoremstyle{definition}

\newtheorem{theorem}{Theorem}

\newtheorem{corollary}{Corollary}

\definecolor{purple1}{rgb}{128,0,128}
\definecolor{darkgreen1}{rgb}{0,0,128}

\usepackage{bigints}
\newcommand{\bea}{\begin{eqnarray}}
\newcommand{\ea}{\end{eqnarray}}

\definecolor{darkpastelgreen}{rgb}{0.01, 0.75, 0.24}
\makeatletter
\newcommand*{\rom}[1]{\expandafter\@slowromancap\romannumeral #1@}
\makeatother
\begin{document}
\title{\bm{Quantum Tsunamis in a Bose-Einstein Condensate: An Analogue of Nonlinear
Gravitational Wave Memory}}
\title{Emergence of small scale structure of effective space-time through the propagation of dispersive shocks wave in Bose-Einstein Condensates} 
\title{Quantum pressure: The cosmic censor of Bose-Einstein 
condensate shock waves}
\title{Singularity in Fluid dynamical model of analogue gravity }
\title{Occurrence of Penrose type singularity in 2+1 D analogue blackholes}
\title{Probing spacetime singularities in sonic black holes}
\title{Probing Penrose-type spacetime singularities in sonic black holes}
\title{Probing Penrose-type singularities in sonic black holes}
\author{Satadal Datta} 
\affiliation{Seoul National University, Department of Physics and Astronomy, Center for Theoretical Physics, Seoul 08826, Korea}
\affiliation{Institut Pprime, UPR 3346 CNRS–Universit\'e de Poitiers–ISAE ENSMA, France}
\author{Uwe R. Fischer}
\affiliation{Seoul National University, Department of Physics and Astronomy, Center for Theoretical Physics, Seoul 08826, Korea}
\date{\today}
\begin{abstract}
Addressing the general question whether Penrose singularities physically exist inside black holes, we investigate the problem in the context of an analogue system, a flowing laboratory liquid, for which the governing equations are at least in principle known to all relevant scales, and in all regions of the effective spacetime. We suggest to probe the physical phenomena taking place close to the singularity in the interior of a $2+1$D  analogue black hole arising from a polytropic, inviscid, irrotational, and axisymmetric steady flow, and propose to this end an experimental setup in a Bose-Einstein condensate.
Our study provides concrete evidence, for a well understood dynamical system, that the Einstein equations are not necessary for a singularity to form, demonstrating that Penrose-type spacetime singularities can potentially also exist in non-Einsteinian theories of gravity. 
Finally, we demonstrate how the singularity is physically avoided in our proposed laboratory setup.
\end{abstract}
\maketitle

Oppenheimer and Snyder \cite{Oppenheimer} provided the first explicit description of the formation of a Schwarzschild black hole \cite{Sb} from the spherically symmetric gravitational collapse of nonrotating cold neutron stardust, with a total mass exceeding the Tolman-Oppenheimer-Volkoff limit  ($M\sim 0.7 M_{\odot}$) \cite{Tolman,TOV}. 
The solution indicated the possible formation of spacetime singularity as the neutron star shrinks in size below the Schwarzschild radius, continuing to shrink until its radius becomes zero, so that mass and energy 
becoming locally infinite. The Penrose singularity theorem \cite{Penrose65PRL} falsifies 
Lifshitz and Khalatnikov's claim of no singularity in the Universe \cite{Lifshitzem}, for a more general process than a spherically symmetric collapse of dust. If a spacetime with a noncompact Cauchy hypersurface contains a trapped surface, and if the null energy condition is satisfied \cite{Witten}, then the spacetime is null geodesically incomplete: There then exists at least one null geodesic starting from the trapped surface that can not be extended to the future beyond a {\it caustic}, within a finite affine parameter interval. This characterizes a {\it Penrose singularity} \cite{Penrose65PRL, Hawking1972BlackHI, PenroseT, ECSS, Witten, Wald, ReviewPenrose}. 
A caustic (also called {\it focal point}) is a point in spacetime where the null expansion $\Theta$ 
\cite{Penrose65PRL, Witten, RC, Sachs} becomes negative infinity. During gravitational collapse (not necessarily spherically symmetric) in an asymptotically flat spacetime close to a Schwarzschild black hole  
the aforementioned conditions in the theorem can be satisfied, leading to a Penrose singularity. 

In what follows, we aim to establish the mathematical and  physical conditions for a Penrose-type singularity in a laboratory system. Our primary task is to clarify the elusive nature of spacetime 
singularities in a system which is at least in principle under complete microscopic control. 
Note, in particular, that under the cosmic censorship conjecture \cite{Penrose99,Wald99} the observational access to the singularity of a real black hole from the outside is prohibited by the existence of an event horizon.  It is 
crucial for our purpose of probing the physical nature of the singularity that, in the analogue scenario, the lab experimentalist {\it can} access the black hole interior. 

{Unruh has shown in 1981 \cite{unruh}  that for an inviscid barotropic fluid with zero vorticity, linearized  perturbations on top of a given background   
satisfy a minimally coupled massless scalar field equation in an effective spacetime with pseudo-Riemannian {\em acoustic} metric which depends on background flow density and velocity \cite{BLV}. If a time-independent background becomes supersonic in the flow's direction, sound remains trapped in the supersonic region, and the boundary to the subsonic region becomes 
an event horizon; thus an analogue black hole (ABH) is formed \cite{Unruh95}, opening the door to observe Hawking radiation experimentally. On the classical level, this led to 
the observations of stimulated Hawking emission in water tank setups 
\cite{Rousseaux2008,Weinfurtner,Euve,Unruh2014} and the quantum effect of spontaneous Hawking radiation
in a Bose-Einstein condensate (BEC) \cite{PhysRevLett.85.4643,barcelo2001analogue,ModPhysLett,Carusotto_2008, Steinhauer16,Munoz,Kolobov2021}. Other systems enabling Hawking radiation studies are, e.g., fiber optical systems \cite{Philbin1367,Rosenberg}, superfluid helium \cite{Volovik_1999,Volovik2006}, photon fluids  \cite{Marino,Nguyen,PhysRevB.86.144505}, and Weyl semimetals \cite{VolovikWSM_2016,WilczekWSM,VolovikWSM}.

We here probe a novel aspect of the spacetime of an ABH, a potential 
Penrose-type singularity at its center.
In our model setup, importantly, the bounded affine parameter value is associated to a finite lab time interval as well. For clarity, we should emphasize here that the Penrose singularity theorem is {\em not} a statement about the divergence of curvature scalars \cite{kerr2023blackholessingularities,Penrose}, 
a divergence frequently occurring in analogue models \cite{VolovikBriefReview, PRD2023}. 
We also note that while it has been demonstrated before that a curvature singularity 
can occur in {\em background-incompressible} quasi-one-dimensional models within a finite affine parameter distance \cite{Barcelo_2004}, a trapped surface 
cannot be contained in such a 1+1 D model which, while not required for Hawking radiation \cite{Sonego}, is essential for the Penrose singularity theorem to apply. Furthermore, the necessary 
ingredient 
Raychaudhuri equation \cite{RC} for a null geodesic congruence \cite{Sachs} is not even defined in $1+1$D \cite{Witten}. 
Since a trapped surface is a closed hypersurface, an ABH 
in $2+1$D or $3+1$D spacetime however would contain a trapped surface in the supersonic flow region. 
The non-negativity of local energy density required in the Penrose theorem 
is a consequence of Einstein's gravity. The acoustic spacetime metric is however here determined by nonrelativistic fluid dynamics, forming an example of a nonlinear scalar field theory 
of gravity \cite{Datta_2022} also cf., e.g. Ref.~\cite{Novello}.
We thus here demonstrate how a Penrose-type 
singularity could potentially occur in a non-Einsteinian theory of gravity as well.
By employing an ABH, we provide insight into the nature of the puzzle a 
Penrose-type singularity presents. In particular, our proposed experimental design allows us to verify how the singularity is {\it physically} avoided.
Below, we will provide two theorems to establish the mathematical occurrence of a 
Penrose-type singularity inside ABHs; details are provided in the Supplemental Material \cite{suppl}.

{\it{Axisymmetric two-dimensional steady flow.}} 
Stationary continuity  in cylindrical coordinates $(r,\phi,z)$ yields 
\begin{equation}\label{continuity}
\frac{d}{d r}(r\rho_{0} v_{0}^r)=0,
\end{equation}
where in general fluid velocity and density are $(v_{0}^r (r), v_{0}^\phi (r), 0)$ and  $\rho_{0}(r)$, respectively; 
thus we have the mass flow constant 
$r \rho_{0}v_{0}^r={\rm constant}=C_1$; the quantity $2\pi r \rho_{0}v_{0}^r H$ is a constant mass flow rate, with $H$ the vertical height of the system. 
Maintaining such a time-independent nonzero constant mass flow requires the presence of a source and a drain at some finite radius around $r=0$ \cite{Chikka, Caio}. 
Naturally, Eq.~\eqref{continuity} can not be extended to $r=0$ for physical flows with nonzero mass flow rate. Only limiting values of the fluid quantities as $r$ approaches zero are well defined. 
The origin, $r=0$ can be considered as the {\it fluid dynamical singularity} of an axisymmetric flow  with nonzero mass flow rate. We also note that $r=0$ is a {\it coordinate singularity} in cylindrical coordinates  ($r, \phi, z$). Deriving the Schwarzschild black hole solution from the spherically symmetric time-independent Einstein field equations, one is faced with a similar  singularity at the location of the point mass source.
The Euler equation in radial direction reads 
\begin{equation}\label{vr}
v_{0}^r\frac{d v_{0}^r}{d r}-\frac{(v_{0}^\phi)^2}{r}=-\frac{1}{\rho_{0}}\frac{d p_{0}}{d r}-\frac{d V_{\rm ext}}{d r}.
\end{equation}
We assume the external potential $V_{\rm ext}(r)$ to be a smooth function of $r$ in the domain of space of the finite sized system, 
a minimal requirement of our analysis. {Since no torque is applied to the fluid}, the Euler equation along $\phi$ gives conserved angular momentum, $r v_{0}^\phi\coloneqq l ={\rm constant}$. 


We assume a polytropic equation of state, $p=K\rho^\gamma$, where 
$\gamma > 1$ 
and $K>0$; the 
sound speed derives from 
$c_{s0}^2=\frac{dp}{d\rho}|_{p=p_{0},\rho=\rho_{0}}$. 
Integrating Eq. \eqref{vr}, we are left with Bernoulli's constant $\frac{1}{2} (v_{0}^r)^2 +\frac{l^2}{2r^2}+\frac{c_{s0}^2}{(\gamma -1)}+V_{\rm ext} (r)={\rm constant}\coloneqq C_2.$ 
The first-order ordinary differential equations (ODE) for axisymmetric steady flow with nonzero mass flow rate, Eq.~\eqref{continuity} and Eq.~\eqref{vr},  are well defined everywhere in the noncompact space $r>0$ if the latter region is a subset of the smooth real function $V_{\rm ext}(r)$'s domain. Additionally, $V_{\rm ext}(r)$ could be a smooth function of $r$ at $r=0$ as well. We call the region $r>0$ the {\it maximal domain $M$}, 
because in this region of space we define the first order ODEs, Eqs.~\eqref{continuity} and \eqref{vr}. 

Changing variables from $(\rho_{0}, v_{0})$ to $(c_{s0}, v_{0})$, we have 
\begin{eqnarray}\label{dcsdr}
\frac{d c_{s0}}{dr}&=&-\frac{c_{s0}(\gamma -1)}{2}\left(\frac{1}{v_{0}^r}\frac{dv_{0}^r}{dr}+\frac{1}{r}\right), \\
\label{dvdr}
\frac{dv_{0}^r}{dr}&=&\frac{\frac{{c_{s0}}^2}{r}-\frac{d V_{\rm ext}}{d r}+\frac{l^2}{r^3}}{v_{0}^r\left(1-\frac{c_{s0}^2}{(v_{0}^r)^2 }\right)}.
\end{eqnarray}
Eqs.~\eqref{dcsdr} and \eqref{dvdr} are insensitive to the sign of $ v^r_{0}$, the radial direction of flow. Conventionally, we choose positive $C_1$. 
To solve this system of first order ODEs, specifying initial conditions of the initial value problem (IVP) is necessary, i,e., 
$c_{s0}, v^r_{0}$ at a radius $r_0$ are required. 
Given an IVP within a domain of $r$, if the solution $\rho_{0}(r)$ is positive, continuous and finite valued everywhere, and the $v_{0}^r (r), v_{0}^\phi (r)$ are continuous and finite valued, then the solution is a {\it steady physical flow} in that domain. The maximal domain of a physical steady flow $M_P$ depends on the IVP. Evidently, $M_P \subseteq M$. The region of space occupied by an axisymmetric steady flow in a finite sized system 
is a subspace of $M_P$. Insertion of the expression of $\frac{dv_{0}^r}{dr}$ of Eq. \eqref{dvdr} into the Eq. \eqref{dcsdr} yields $\frac{d c_{s0}}{dr}\coloneqq f(c_{s0}, v^r_{0}, r)$, and $\frac{d v^r_{0}}{dr}\coloneqq g(c_{s0}, v^r_{0},r)$. 
This leads us to formulate \cite{suppl} 
\begin{theorem}\label{thm1}
Given an IVP at $r=r_0$ with finite $\rho_{0}(r_0)>0$, and finite $ v^r_{0}(r_0)$ and $v^\phi_{0}(r_0)$, the maximal domain in $r$ of an axisymmetric steady physical flow $M_P$, with nonzero mass flux rate, is the region where the functions $f(c_{s0}, v^r_{0}, r)$ and $g(c_{s0}, v^r_{0},r)$ exist and are Lipschitz continuous \cite{grant2014theory,teschlordinary} with respect to $v^r_{0}$, and to $c_{s0}$ when $V_{\rm ext}(r)$ is smooth in $M$. 
\end{theorem}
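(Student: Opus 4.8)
The plan is to read Theorem~\ref{thm1} as an instance of the Picard--Lindel\"of (Cauchy--Lipschitz) existence--uniqueness theorem together with its maximal--continuation corollary \cite{teschlordinary}, applied to the first--order system $\tfrac{d}{dr}(c_{s0},v^r_0)=(f,g)$. I would first collect the two equations into a single vector field $\mathbf F(c_{s0},v^r_0,r)=(f,g)$ and note that both components are rational in $(c_{s0},v^r_0)$, their only $r$--dependence entering through $V'_{\rm ext}(r)$, the centrifugal term $l^2/r^3$, and the geometric factor $1/r$. Since a rational map is $C^\infty$---hence locally Lipschitz---wherever its denominator does not vanish, the set on which $\mathbf F$ exists and is Lipschitz in the state variables is exactly the complement of the zero sets of those denominators, namely $v^r_0=0$ together with the sonic locus $v^r_0=\pm c_{s0}$. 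Smoothness of $V_{\rm ext}$ on $M$ then renders $V'_{\rm ext}$ continuous, which both makes $\mathbf F$ continuous in $r$ (all that Picard--Lindel\"of requires of the independent variable) and, because $f$ inherits $V'_{\rm ext}$ through its dependence on $g$, secures the Lipschitz property of $f$ in $c_{s0}$.

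With the Lipschitz region identified, I would establish the asserted set equality in two directions. For \emph{physical $\Rightarrow$ Lipschitz}: on any interval carrying a physical flow, $\rho_0=C_1/(r\,v^r_0)$ being finite and positive forces $v^r_0>0$, the equation of state $c_{s0}^2=K\gamma\rho_0^{\gamma-1}$ forces $0<c_{s0}<\infty$, and finiteness of $dv^r_0/dr=g$ excludes the sonic locus, so no denominator vanishes and $\mathbf F$ is Lipschitz there. For \emph{Lipschitz $\Rightarrow$ physical}: the local theorem applied at $r_0$ yields a unique $C^1$ trajectory, and the continuation corollary extends it to a maximal open interval on which it remains in the Lipschitz region and bounded; along that interval $v^r_0>0$ and $c_{s0}>0$ persist, so that $\rho_0=C_1/(r v^r_0)$ and $v^\phi_0=l/r$ are finite and continuous, i.e.\ the flow is physical. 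Combining the two inclusions identifies $M_P$ with the maximal interval of existence.

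The step I expect to be the main obstacle is the careful treatment of the sonic locus $v^r_0=c_{s0}$, which is simultaneously the boundary of the Lipschitz region and the analogue horizon. The continuation corollary leaves only one alternative at the ends of the maximal interval: the trajectory must leave every compact subset of the open Lipschitz region, i.e.\ either the state blows up---$v^r_0\to0$ or $c_{s0}\to\infty$ (density diverging), $v^r_0\to\infty$ or $c_{s0}\to0$ (density vanishing)---or it approaches the sonic locus, where, away from a critical point, the numerator of $g$ stays nonzero so that $g\to\infty$ and the profile $v^r_0(r)$ develops a vertical tangent and can no longer be continued single--valuedly. I would then check that each of these is exactly a way a steady flow ceases to be physical, so that the endpoints of the maximal existence interval coincide with $\partial M_P$. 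The one delicate case is the regular critical point, where numerator and denominator of $g$ vanish together and the apparent singularity of $\mathbf F$ is removable; there the two transonic branches meet and must be matched by a separate argument lying outside the single--IVP scope of this theorem.
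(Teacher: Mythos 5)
Your proposal is correct and takes essentially the same route as the paper's own proof: both rest on Picard--Lindel\"of/Cauchy--Lipschitz theory for the system $\tfrac{d}{dr}(c_{s0},v^r_0)=(f,g)$, both identify the failure of Lipschitz continuity with the vanishing denominator at the sonic locus (with nonzero numerator), both use the constancy of $C_1$, $C_2$, $l$ to argue that positivity/sign of $\rho_0$, $v^r_0$ persists so that uniqueness of the solution implies physicality of the flow, and both set aside the $0/0$ critical point (the horizon) as a removable case to be handled by limiting values (L'H\^{o}pital) rather than by the single-IVP argument. Your write-up is merely a more formally structured version (explicit two-way set inclusion and the escape-from-compacts continuation alternative) of the paper's terser argument.
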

\begin{theorem}\label{thm2}
If an axisymmetric steady physical flow  with nonzero mass flow rate under a smooth $V_{\rm ext}(r)$ exists in $M$, then $\lim_{r\to 0}V_{\rm ext}(r) =-\infty$.
\end{theorem}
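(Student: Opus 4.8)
The plan is to read $V_{\rm ext}(r)$ off Bernoulli's constant and show that the remaining, manifestly non-negative, terms are forced to diverge as $r\to 0$ by the interplay of the continuity constraint and the polytropic equation of state. First I would rearrange the Bernoulli relation as
\[
V_{\rm ext}(r) = C_2 - \frac{1}{2}(v_0^r)^2 - \frac{l^2}{2r^2} - \frac{c_{s0}^2}{\gamma - 1},
\]
and note that all three subtracted terms are non-negative, since $\gamma>1$ makes $c_{s0}^2/(\gamma-1)>0$. Dropping the angular-momentum term only weakens the bound, so it suffices to prove that
\[
S(r) := \frac{1}{2}(v_0^r)^2 + \frac{c_{s0}^2}{\gamma-1} \longrightarrow +\infty \quad \text{as } r\to 0,
\]
because then $V_{\rm ext}(r)\le C_2 - S(r)$ gives $\limsup_{r\to 0}V_{\rm ext}(r)=-\infty$, which forces $\lim_{r\to 0}V_{\rm ext}(r)=-\infty$ without any monotonicity assumption on $V_{\rm ext}$.

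The core of the argument is that $v_0^r$ and $c_{s0}$ cannot both remain finite at the origin. Continuity gives $\rho_0 v_0^r = C_1/r$ with $C_1\neq 0$, so the product $\rho_0 v_0^r$ diverges; inverting the polytropic relation $c_{s0}^2 = \gamma K\rho_0^{\gamma-1}$ as $\rho_0 = (c_{s0}^2/\gamma K)^{1/(\gamma-1)}$ turns this into the pointwise constraint
\[
(v_0^r)^2\,(c_{s0}^2)^{2/(\gamma-1)} = \frac{C_1^2(\gamma K)^{2/(\gamma-1)}}{r^2} =: D(r),
\]
where $D(r)\to+\infty$ as $r\to 0$. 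I would then minimize $S$ subject to this single constraint: whatever trade-off the actual flow makes between large velocity and large sound speed (density), a weighted AM-GM inequality, equivalently a one-line Lagrange-multiplier computation, yields
\[
S(r) \ge \kappa\, D(r)^{(\gamma-1)/(\gamma+1)} = \kappa'\, r^{-2(\gamma-1)/(\gamma+1)},
\]
with constants $\kappa,\kappa'>0$ depending only on $\gamma,K,C_1$. Since the exponent $(\gamma-1)/(\gamma+1)$ is strictly positive for $\gamma>1$, this lower bound diverges as $r\to 0$ regardless of the detailed profile, establishing $S(r)\to+\infty$ and hence the theorem.

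The step I expect to be the main obstacle is the second one. It is not enough to observe that $\rho_0 v_0^r\to\infty$, because a priori one factor could blow up while the other vanishes, and then neither $\frac{1}{2}(v_0^r)^2$ nor $c_{s0}^2/(\gamma-1)$ alone would have to diverge. The polytropic coupling is precisely what closes this loophole: it converts the product constraint into the fixed-sign power-law constraint above, whose convexity (captured by AM-GM) guarantees that the particular Bernoulli combination $S$ diverges no matter how the blow-up is distributed between density and velocity. A minor point to state carefully is that the conclusion concerns the boundary point $r=0\notin M$: the flow is only assumed physical on $r>0$, and the divergence of $V_{\rm ext}$ is extracted as a limit there, consistent with $V_{\rm ext}$ being smooth on $M$ yet unbounded below at the excluded origin.
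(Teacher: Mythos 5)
Your proof is correct, and it uses the same two physical inputs as the paper's own proof---Bernoulli's constant and the constraint obtained by combining continuity with the polytropic equation of state (your relation $(v_0^r)^2\,(c_{s0}^2)^{2/(\gamma-1)}=D(r)$ is exactly the paper's Eq.~(S1), $c_{s0}^2 (v_{0}^r)^{\gamma -1}=C_1'/r^{\gamma-1}$, raised to the power $2/(\gamma-1)$)---but the logical packaging is genuinely different. The paper argues by contradiction and splits into cases: for $l\neq 0$ the centrifugal term $l^2/2r^2$ in Bernoulli's constant diverges and alone forces $V_{\rm ext}\to-\infty$; for $l=0$ it observes that a finite limit of $V_{\rm ext}$ would, via Bernoulli and the non-negativity of the kinetic and enthalpy terms, keep \emph{both} $c_{s0}$ and $v_0^r$ bounded near the origin, contradicting the divergence of the product $c_{s0}^2(v_0^r)^{\gamma-1}$. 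This is also why the ``loophole'' you flag as the main obstacle (one factor blowing up while the other vanishes) never threatens the paper's argument: in the contradiction framing, boundedness of both factors is supplied by the hypothesis, so no trade-off analysis is needed. Your route is direct and uniform in $l$: you discard the non-negative centrifugal term and convert the constraint, via weighted AM--GM (equivalently a Lagrange-multiplier minimization whose optimum sits at the sonic point $(v_0^r)^2=c_{s0}^2$), into the quantitative bound $S(r)\ge\kappa'\,r^{-2(\gamma-1)/(\gamma+1)}$. What this buys: no case split, no proof by contradiction, and an explicit necessary divergence \emph{rate}---any admissible potential must obey $V_{\rm ext}(r)\le C_2-\kappa'\,r^{-2(\gamma-1)/(\gamma+1)}$, which for a BEC ($\gamma=2$) demands divergence at least as fast as $-r^{-2/3}$ and is consistent with the paper's choice $V_{\rm ext}=-V_0/r$. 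What the paper's version buys is brevity: the qualitative contradiction requires no optimization step at all.
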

\begin{corollary}
If an axisymmetric physical steady flow  with nonzero mass flow rate under a smooth $V_{\rm ext}(r)$ exists in the domain $0<r<R$, then $\lim_{r\to 0}V_{\rm ext}(r) =-\infty$.
\end{corollary}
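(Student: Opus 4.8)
The plan is to observe that this corollary is a purely \emph{local} sharpening of Theorem~\ref{thm2}: the proof of Theorem~\ref{thm2} only ever probes the flow on a punctured neighborhood $(0,\delta)$ of the origin, so the global hypothesis there (a flow on all of $M$) can be weakened to existence on $0<r<R$ without altering a single step. Concretely, on $0<r<R$ the three first integrals remain available: the nonzero mass flux $r\rho_{0}v_{0}^r=C_1$, the conserved angular momentum $r v_{0}^\phi=l$, and Bernoulli's constant $C_2$. I would first use Bernoulli's relation to isolate the potential,
\begin{equation}
V_{\rm ext}(r)=C_2-\tfrac{1}{2}(v_{0}^r)^2-\frac{l^2}{2r^2}-\frac{c_{s0}^2}{\gamma-1},
\end{equation}
and note that since every subtracted term is nonnegative we have $V_{\rm ext}(r)\le C_2$ throughout; the content of the statement is therefore that at least one of these terms must diverge to $+\infty$ as $r\to 0$.

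Then I would split into two cases. If $l\neq 0$, the centrifugal term $l^2/(2r^2)$ alone forces $V_{\rm ext}\to-\infty$, and we are done. The substantive case is the irrotational-radial one, $l=0$, where the divergence must come from the combination $Q(r):=\tfrac{1}{2}(v_{0}^r)^2+c_{s0}^2/(\gamma-1)$. Here I would combine continuity with the polytropic relation $c_{s0}^2=K\gamma\,\rho_{0}^{\gamma-1}$ (so that $\rho_{0}\propto c_{s0}^{2/(\gamma-1)}$, using $\gamma>1$) to eliminate the density and obtain a single constraint of the form $c_{s0}^{2/(\gamma-1)}\,v_{0}^r=\tilde C/r$ with $\tilde C>0$, whose right-hand side blows up as $r\to 0$. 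Because $C_1>0$ and $\rho_{0}>0$, continuity also guarantees $v_{0}^r>0$ on the whole domain, so both factors on the left are genuinely positive.

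The conclusion then follows by a short sequential (contrapositive) argument: if $Q(r)$ failed to tend to $+\infty$, there would be a sequence $r_n\to 0$ along which $v_{0}^r(r_n)$ and $c_{s0}(r_n)$ are both bounded, whence the product $c_{s0}^{2/(\gamma-1)}v_{0}^r$ would remain bounded along $r_n$ --- contradicting $\tilde C/r_n\to\infty$. Hence $Q(r)\to+\infty$ and $V_{\rm ext}(r)\to-\infty$.

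I expect the main obstacle to be making the $l=0$ step airtight rather than computational: one must verify that the open-domain definition of a \emph{physical flow} does not admit a degenerate escape route. In particular the argument must not assume that $\rho_{0}$ (equivalently $c_{s0}$) stays bounded as $r\to 0$ --- if it blows up, the internal-energy term $c_{s0}^2/(\gamma-1)$ already drives $V_{\rm ext}\to-\infty$, and the sequential argument is precisely what unifies this with the bounded-density scenario. One should also record explicitly where $\gamma>1$ enters (to make the exponent $2/(\gamma-1)$ positive and finite, and to keep $c_{s0}^2/(\gamma-1)$ well defined), and that the nonvanishing mass flux $C_1\neq 0$ is exactly what makes $\tilde C/r$ diverge in the first place.
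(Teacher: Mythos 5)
Your proof is correct and takes essentially the same route as the paper's: the corollary is indeed just Theorem~\ref{thm2} localized to a neighborhood of the origin, and you use the same three first integrals, the same split into $l\neq 0$ and $l=0$, and the same continuity-plus-polytrope relation --- your $c_{s0}^{2/(\gamma-1)}v_{0}^r=\tilde C/r$ is simply the $(\gamma-1)$-th root of the paper's $c_{s0}^2\,(v_{0}^r)^{\gamma-1}=C_1'/r^{\gamma-1}$, with Bernoulli then forcing $V_{\rm ext}\to-\infty$. Your sequential (contrapositive) formulation is in fact slightly tighter logically than the paper's, which phrases the contradiction as ``assume $\lim_{r\to 0}V_{\rm ext}(r)$ is finite'' and thereby glosses over the possibility that the limit fails to exist at all.
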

Theorem \ref{thm2} implies that for a $V_{\rm ext}(r)$ which is smooth and finite everywhere, a time-independent axisymmetric flow with nonzero mass flow rate fails to exist in $M$. 
However, $\lim_{r\to 0}V_{\rm ext}(r) =-\infty$ is not a sufficient condition 
for steady radial flow to exist in the domain $M$,  
because it does not exclude the possibility that at some finite $r$, ${dv^r_{0}}/{dr}$ could be diverging. Eq. \eqref{continuity} confirms that ${d\rho_{0}}/{dr}$ would then need to diverge as well to counterbalance ${dv^r_{0}}/{dr}$, making the functions  $\rho_{0}(r)$ and $v^r_{0}(r)$ nonanalytic. If the IVP is given at such an $r$, theorem \ref{thm1} implies that the solution would not be unique in its neighborhood. Since steady flow would then not exist in $M$, i.e., if our sink is of infinitesimal radius, the flow has to become time dependent in  $M$. Therefore,  not only a proper choice of $V_{\rm ext}(r)$ creates a steady flow with nonzero mass flow rate to exist in $M$, but we also need to specify the IVP properly. 

{\it Acoustic metric.}
We consider the $z$ direction to be frozen (quasi-2D trapping geometry). Then, according to \cite{BLV,Visser_2005}, the physical acoustic metric in an axisymmetric 
time-independent $2+1$D flow reads 
\begin{equation}\label{agmn}
g_{\mu\nu}=\left(\frac{\rho_{0}}{c_{s0}}\right)^2\begin{bmatrix}
 -\left(c_{s0}^{2}-(v_{0}^r)^2-\frac{l^2}{r^2}\right) & -v_{0}^{r} & -l\\
-v_{0}^{r}&1&0 \\
-l&0 &r^2
\end{bmatrix}.
\end{equation}
The metric components (also those of the inverse metric) are finite quantities for physical flows. Therefore, the metric does not have a coordinate singularity in $M_P$ in the (inertial) nonrelativistic laboratory frame.

The event horizon in $2+1$D is the hypersurface where $g^{rr}=0$ \cite{suppl};  therefore $(c_{s0}(r_H))^2=(v_{0}^{r}(r_H))^2$ (a static limit surface arises as well 
from nonvanishing angular momentum of the flow  \cite{suppl}). For a black hole event horizon, $v_{0}^{r}$ is negative ($v_0^r=-v_0$). 
Inside the ABH event horizon where the flow is supersonic, at constant $t$, the $r=$ constant closed hypersurface $S^1$ is a one-dimensional trapped surface $T^1$, i.e., null geodesics pointing outward orthogonal to the trapped surface also converge initially at $T^1$. We thus generalize the trapped surface concept in $3+1$D spacetimes to $2+1$D spacetimes; this is elaborated further in the SM \cite{suppl}.

{\it Analytical description of calculating the event horizon radius.} 
We map the coupled ODEs, Eq. \eqref{dcsdr} and Eq. \eqref{dvdr}
 to a {\em dynamical system} equations \cite{strogatz2018nonlinear}, by choosing a parameter $p$ such that 
 \begin{equation}
  \frac{d v_{0}^r}{dp}=\frac{{c_{s0}}^2}{r}-\frac{d V_{\rm ext}}{d r}+\frac{l^2}{r^3},
  \quad\,\, \frac{dr}{dp}=v_{0}^r\left(1-\frac{c_{s0}^2}{(v_{0}^r)^2 }\right); 
  \end{equation}
   and ${d c_{s0}}/{dp}$ then directly follows from Eq.~\eqref{dcsdr}. We can identify the event horizon condition by assuming finite ${dv_{0}^r }/{dr}$ at $r=r_H$, with fixed points conditions (${dr}/{dp}=0$, ${dv_{0}^r}/{dp}=0$), and then determine criteria for suitable steady-state solutions corresponding to real-valued horizon radii. 
 The corresponding method was first introduced in  \cite{Muchotrzeb-Czerny1986}, and has subsequently 
 been extensively used in the astrophysical literature cf., e.g., Refs.~\cite{a1, Abraham_2006,Bilic_2014, Datta2016}. 
For an external potential $V_{\rm ext}(r)=-\frac{V_0}{r}$ ($V_0>0$, satisfying the condition of  theorem \ref{thm2}) and for steady and purely radial ($l=0$) flow, we have an event horizon at $r_H=\frac{(3-\gamma)}{2(\gamma -1)C_2}V_0$. We work in units such that $V_0 =r_H =1$, which also implies
$c_{sH}=\sqrt{{V_0}/{r_H}}=1$.
Using the fixed point values as initial values, we numerically solve the IVP of Eqs.~\eqref{dcsdr} 
and \eqref{dvdr} \cite{suppl}.
\begin{figure}[t] 
\includegraphics[scale=0.58]{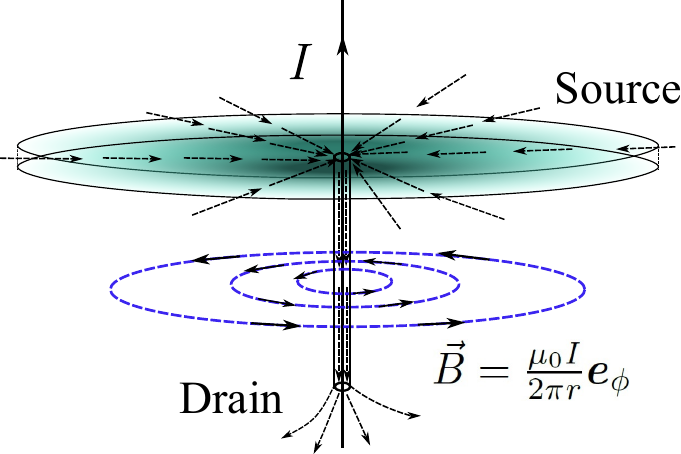}
\caption{\label{fsetup} Proposed setup for axisymmetric purely radial BEC flow creating an analogue black hole, 
BEC atoms are strongly confined in the $z$ direction (quasi-2D geometry). A steady current along $z$ creates an 
azimuthal magnetic field $\vec{B}$ according to Amp\`ere's law, 
leading to an external potential for atoms with magnetic moment flowing radially inward.
A coherent source at the boundary and a drain at the center 
(outcoupling the atoms \cite{Blochcw,laserO}) 
maintains a steady flow.}
\end{figure}

{\it Schematic of an experimental design.} {The potential $V_{\rm ext}\propto -\frac{1}{r}$ can 
be created in a BEC of atoms with magnetic moment $\tilde{\mu}>0$ 
oriented along an external magnetic field, cf.~Fig.~\ref{fsetup}.  
If we ignore the so-called quantum pressure (at sufficiently large wavelengths) and thus do not probe the analogue trans-Planckian regime, 
the Gross-Pitaevski\v\i\/ equation 
can be mapped to inviscid irrotational fluid dynamics, with equation of state, $p=\frac{1}{2} g\rho^2$, where $g>0$ is proportional to the $s$-wave scattering length in a dilute BEC with repulsive contact interactions 
\cite{RevModPhys.71.463}. In an external magnetic field $\vec{B}$, $V_{\rm ext}(r)=-\tilde{\mu} B$ \cite{pethick2002bose}. 
A steady current $I$ in a long wire along $z$ produces an external potential 
of the form 
$ V_{\rm ext}(r)=-\tilde{\mu} B=-\frac{\tilde{\mu}\mu _0 I}{2\pi r}$, 
where $\mu _0$ is the magnetic permeability of vacuum.

The {\it null expansion} $\Theta=l^\mu{}_{;\mu}$, which specifies 
the convergence or divergence of null geodesics with tangent $l^\mu$, 
is governed by the Raychaudhuri equation \cite{RC} as written for null geodesics \cite{RC,Hawking_Ellis_1973,dadhichderivation,Kar2007}. 
It is identical for both radially in- and outward pointing null geodesic congruences inside 
the trapped surface of our ABH \cite{suppl}. 
For radial rays in affine parametrization it reads 
\begin{equation}
\Theta = \frac{1}{2}\left(-\frac{v_0}{c_{s0}}+\frac{r}{c_{s0}}\frac{dv_0}{dr}\right). 
\end{equation}
It approaches minus infinity as $r\rightarrow 0$, identifying $r=0$ as a caustic point \cite{suppl}, 
cf.~Fig.~\ref{fig1}.

\begin{figure}[t]
\vspace*{0.5em}
\centering
\hspace*{-0.5em}
\includegraphics[scale=0.16]{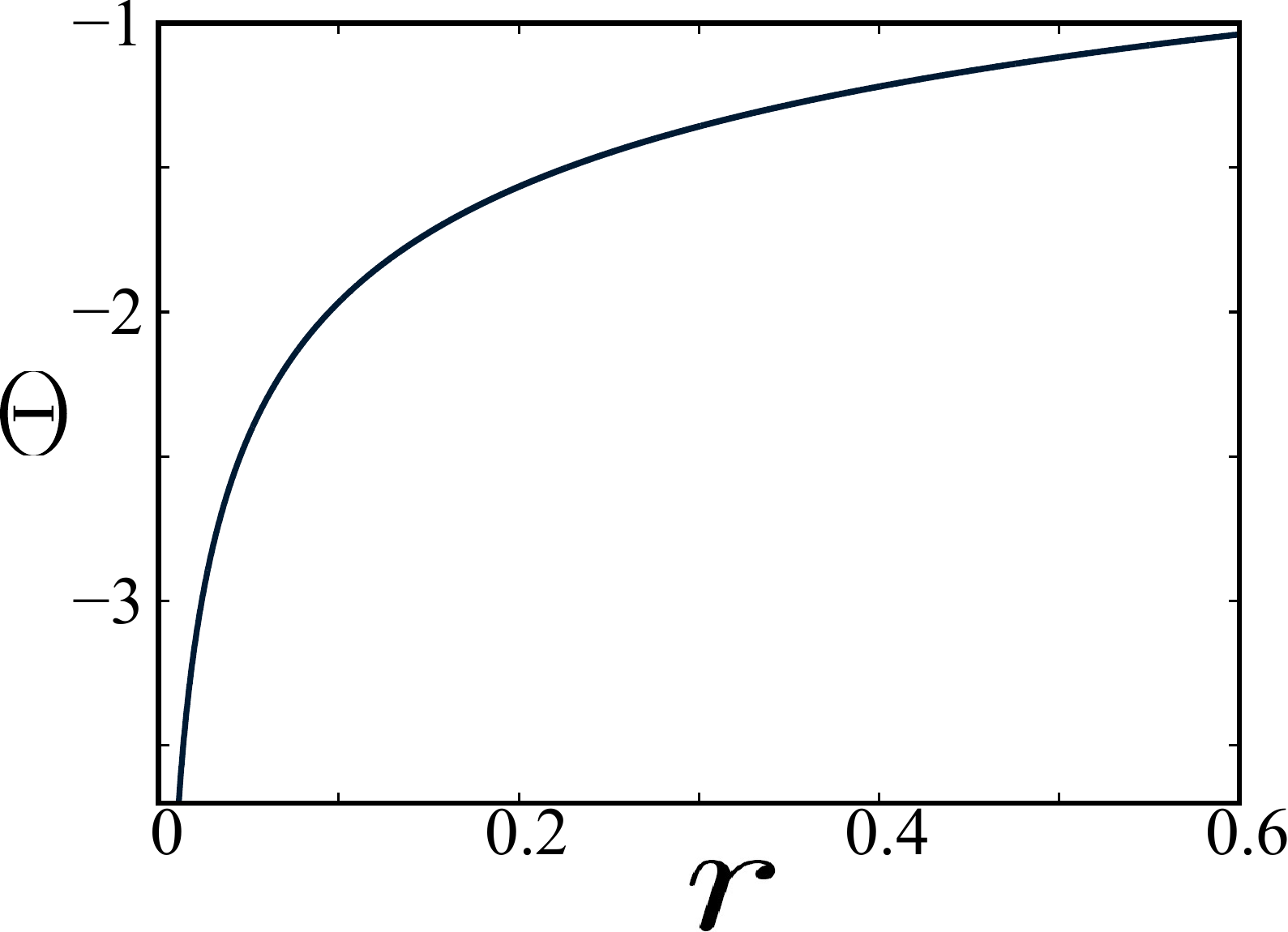}\includegraphics[scale=0.16]{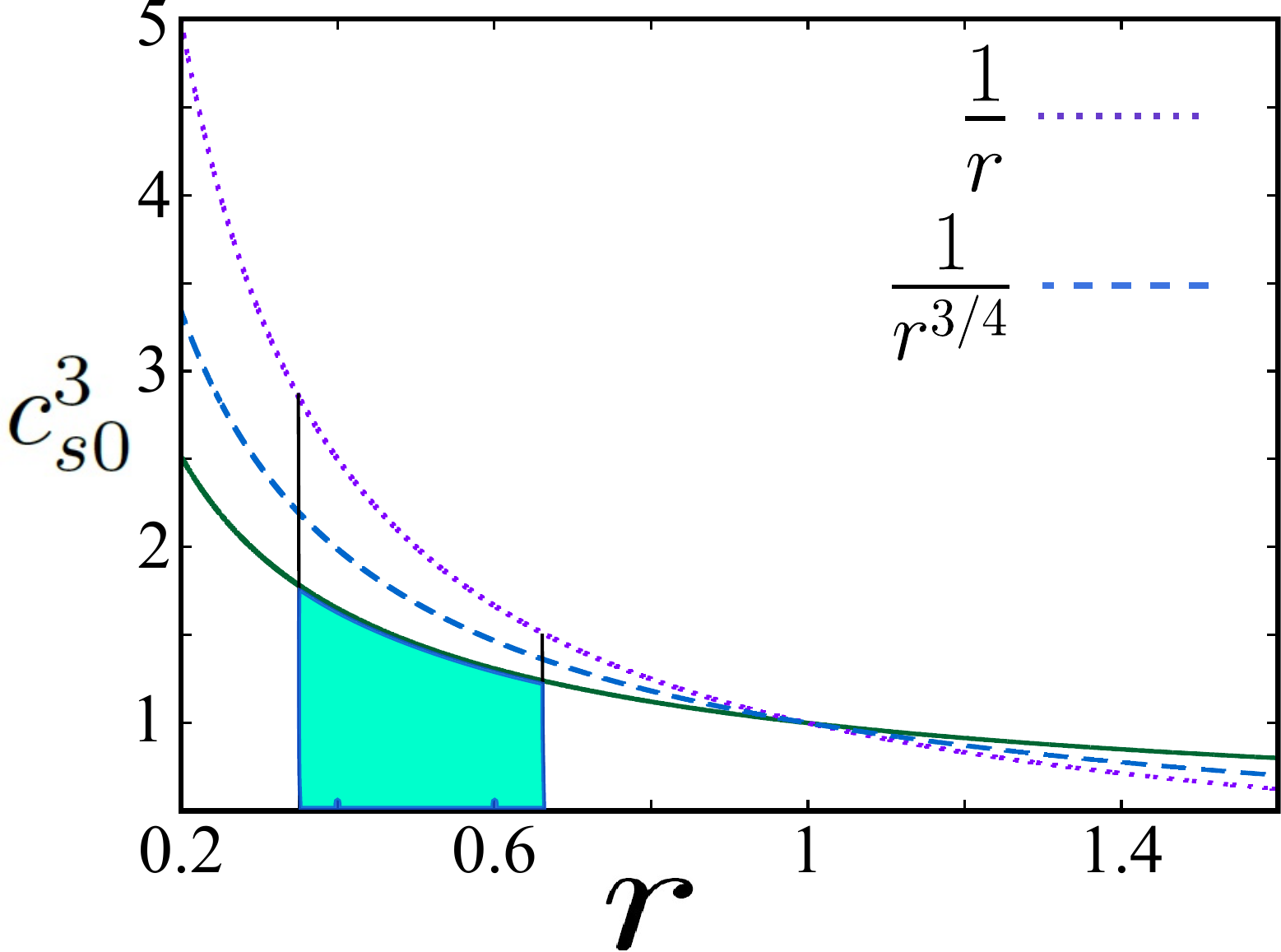} 
\caption{{\it Singularity in our ABH} from a transonic flow with $C_1=1$, $C_2=0.5$. 
(Left) The null expansion $\Theta$ starts from a negative value from the trapped surface at $r=0.6$ and it reaches minus infinity (see \cite{suppl}) when $r$ goes to zero. (Right) Affine parameter interval (see for further discussion in main text) between two given radii for radially in- and outward pointing 
null geodesics is the sea green area under the $c_{s0}^3 (r)$ curve (solid green curve).  
The affine interval is bounded for 
such geodesics traveling from the trapped surface to arbitrarily close to the origin, because $c_{s0}$ decays even
less fast than $r^{-1/4}$ in the supersonic flow region, cf.~Eq.~\eqref{affine}.}

\label{fig1}
\end{figure}

{\it Affine parameter interval.} For radially ingoing or outgoing null geodesic congruences in the acoustic metric \eqref{agmn} with $l=0$, the change in affine parameter $\lambda $ is given by 
\begin{equation} \label{affine}
d\lambda =\left(\frac{\rho _0}{c_{s0}}\right)^2 c_{s0} dr=c_{s0}^3 dr,
\end{equation}
 in units for which the BEC coupling $g=1$ \cite{suppl} .  
The sign in front of the affine parameter is immaterial due to affine transformation invariance 
{\cite{carroll_2019}; one chooses conventionally increasing affine parameter values for the future direction. We observe that the integral $\int _{r=r_0} ^ {r=0+}  (1/r^s) dr=\frac{1}{(-s+1)}\left[r^{-s+1}\right]_{r_0}^{0+}$ (for $s\neq 1$) is finite for real $s<1$,  therefore, the affine parameter interval traversed by a null geodesic traveling in the supersonic flow region to arbitrarily close to $r=0$ is bounded iff $c_{s0}< r^{-1/3}$ in the neighborhood of $r=0$, cf.~Fig.~\ref{fig1}. The presence of the closed future null boundary $B^2$, 
cf.~Fig.~\ref{penrose} confirms the null geodesic incompleteness \cite{Penrose65PRL, Witten} at $r=0$. Fluid-dynamically, it is impossible to have radially outgoing null geodesics in a supersonic flow region,
 and  null geodesics cannot be extended to the future beyond $r=0$. 
Hence this is not only a caustic point, but also {\it the Penrose-type singularity} of our ABH spacetime 
We also note that not only the affine interval for in- or outward pointing null geodesic to reach a point arbitrarily close to $r=0$ 
is bounded, but also that the corresponding lab time interval is bounded \cite{suppl}. 
However, the affine parameter is more suitable to describe 
null geodesics, especially in the supersonic flow region where the lab
 time axis is spacelike in terms of the acoustic metric. Sound pulses starting, e.g., from the trapped surface can be generated by suddenly turning on localized  repulsive laser potentials \cite{Andrews} 
 in our setup shown in Fig.~\ref{fsetup}.
 
Note that in our given case of spherical symmetry and radial null geodesics, 
it {\em coincidentally} happens that affine and lab time intervals are both finite. 
A truncation of null geodesics after a finite {\em lab time} interval then 
leads the experimentalist to unequivocally conclude that there must be a singularity in the sonic spacetime. This however must not hold for general flows and general null geodesics.

Using the numerical background flow solution of Eqs.~\eqref{dcsdr} 
and \eqref{dvdr} and using the relation $\frac{dr}{dt}=(-v_{0}\pm c_{s0})$ for radially out/inward pointing null geodesics, we produce Fig.~\ref{penrose}, which is akin to the one displayed by Penrose  
\cite{Penrose65PRL}. In distinction to the 3+1D case, our ABH is $2+1$D and does not possess a coordinate singularity at the horizon in the $(t,x,y)$ coordinates of the lab. The causal structure  of our ABH, displayed in Fig. \ref{penrose}, is 
in the lab coordinates similar to that of a Schwarzschild black hole in Eddington-Finkelstein coordinates \cite{Penrose65PRL,Eddington,Finkelstein}. 


\begin{figure}[t]
\vspace*{0.5em}
\centering
\includegraphics[scale=0.34]{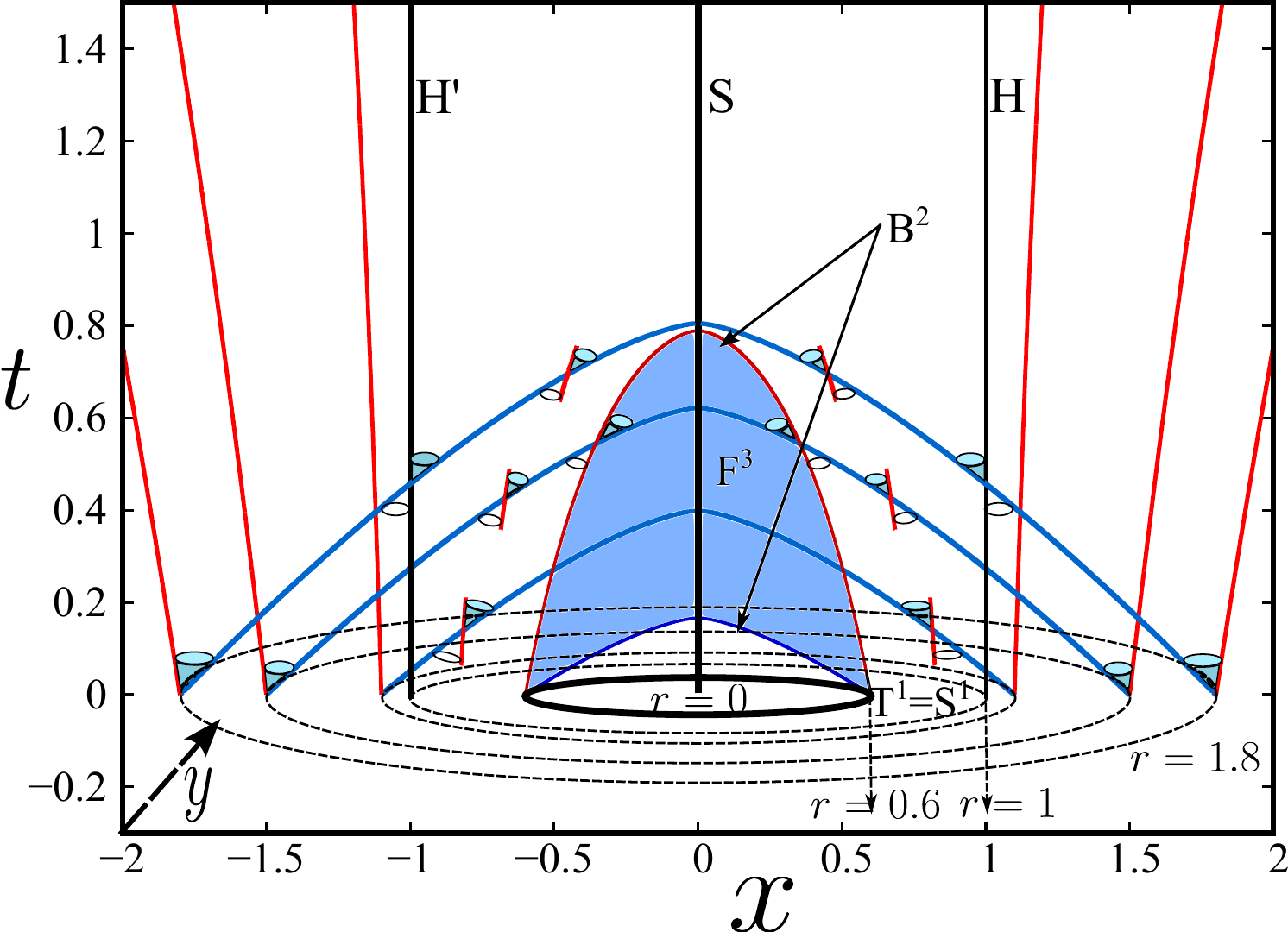}
\caption{{\it Causal structure of the ABH akin to the Penrose construction \cite{Penrose65PRL}}. 
Parameters as in Fig.~\ref{fig1}. Outward and inward pointing null geodesics in red and blue, respectively. The horizon, denoted by $H-H'$, forms a cylindrical hypersurface (in our units located at $r=1$),  
and the singularity ($S$) is at $r=0$. 
At $t=0$, a trapped surface $T^1$ is a 
one-dimensional circle $S^1$, here of radius 0.6, and the three dimensional $F^3$ is the future region occupied by smooth timelike curves from $T^1$, with the two-dimensional null boundary $B^2$, a closed hypersurface. 
Atoms escape from the $x-y$ plane through the drain around $r=0$, cf. Fig.~\ref{fsetup}. 
}
\label{penrose}
\end{figure}
{\it Discussion.}  
We have provided the first concrete laboratory analogue spacetime for 
which a Penrose-type singularity occurs 
at the level of Eulerian hydrodynamics. This is akin to the case of a Schwarzschild black hole singularity
obtained from Einstein gravity.  
The fluid-dynamical equations 
break down very close to the singularity,  
and a stable physical flow requires a drain of finite radius so the domain $M_P$ 
becomes well defined (cf.~Theorem \ref{thm1}); atoms escaping via an extra dimension 
(which is accessible to laboratory observers) thus here resolves the singularity. 
We envisage that singularity theorems inspired by our ABH example can also 
be formulated for more general non-Einsteinian theories of gravity \cite{suppl}. 

For the BEC ABH, significant density gradients  
near the singularity develop, 
and quantum pressure kicks in.
Dispersive corrections then modify the acoustic metric 
 \cite{PRD2023};  
for further increasing gradients, the very concept of a {\em local} acoustic metric 
can even break down completely, leading to nonlocal so-called ``rainbow" spacetimes \cite{rainbow,rainbowWV}. We are then entering an effective trans-Planckian regime already at the semiclassical (mean-field) level, We expect such a regime to occur for real gravity as well, 
either at the semiclassical or full quantum level, the Einstein equations being only valid in a similar hydro- and thermodynamical long-wavelength limit \cite{Ted}.

To address the question of the very existence of singularities in the absence of 
an observationally verified theory of quantum gravity, Kerr \cite{kerr2023blackholessingularities} 
recently has argued that even remaining completely within general relativity, it is debatable whether real black holes forming from an astrophysical process contain singularities.  
By providing a solvable laboratory ABH model in a fluid, 
we have indeed 
shown how its singularity is physically avoided 
already on a classical level. 




{\it Acknowledgments.}  
This work has been supported by the National Research Foundation of Korea under 
Grants No.~2017R1A2A2A05001422 and No.~2020R1A2C2008103. SD acknowledges support 
by the CNRS Grant No.~ANR-22-CPJ2-0039-01.

\bibliography{penrose_singularity_v24}



\vspace*{80em} 
\newpage
\begin{widetext}
\setcounter{equation}{0}
\setcounter{figure}{0}
\setcounter{table}{0}
\setcounter{page}{1}
\renewcommand{\theequation}{S\arabic{equation}}
\renewcommand{\thefigure}{S\arabic{figure}}

\section{Supplemental Material}
\subsection{Proof of main text theorems}

{\bf{Theorem 1.}} Given an IVP at $r=r_0$ with finite $\rho_{0}(r_0)>0$, and finite $ v^r_{0}(r_0)$ as well as finite $v^\phi_{0}(r_0)$, the maximal domain in $r$ of an axisymmetric steady physical flow ($M_P$) with nonzero mass flux rate is the region where $f(c_{s0}, v^r_{0}, r)$ and $g(c_{s0}, v^r_{0},r)$ exist and are Lipschitz continuous in $v^r_{0}$, and in $c_{s0}$ when $V_{\rm ext}(r)$ is smooth in $M$.

\begin{proof}
We are given an IVP as the theorem states; $C_1, C_2, l$ are finite constants. Constant finite $C_2$, and constant finite mass flow rate $C_1$ asserts that $\rho_{0}, v^r_{0}$ does not change sign. Therefore, if the IVP of the ODEs (Eq. \eqref{dcsdr}, Eq. \eqref{dvdr}) specifies positive $\rho_{0}(r_0)$ with finite velocity components, the flow remains physical 
 for the same radial direction (radially outward or inward flow) everywhere in the $r$-domain of a unique solution. Thus we ensured that the flow remains physical whenever there exists a unique solution from the IVP. The $r$-domain of unique solution is determined from Lipschitz continuity as the theorem states.  
If the IVP at a given radius $r_0$ is such that $\frac{dv_{0}}{dr}$ is infinity, Lipschitz continuity is not satisfied. We see from Eq.~\eqref{dvdr} that this is a possibility. If at $r=r_0$, $c_{s0}(r_0)= v^r_{0}(r_0)$ and if the numerator of the expression is nonzero then  $f(c_{s0}, v^r_{0}, r)$, and $g(c_{s0}, v^r_{0},r)$ are diverging, and no Lipschitz continuity in $c_{s0}$ and in $v^r_{0}$ is obtained. From Picard-Lindel\"{o}f's uniqueness theorem  \cite{coddington1984theory}  
for such an IVP, the solution is not unique in the annular $\epsilon$-neighborhood $[r_0-\epsilon, r_0+\epsilon]$ ($\epsilon >0$) given a smooth $V_{\rm ext}(r)$ in $M$. If at some $r$, $\frac{dv_{0}}{dr}$ is of the $0/0$ form, the functions $f(c_{s0}, v^r_{0}, r)$, and $g(c_{s0}, v^r_{0},r)$ do not exist, and we need to work with limiting values by finding $\frac{dv_{0}}{dr}$ using L'H\^{o}pital's rule.
\end{proof}
{\bf{Theorem 2.}} {If an axisymmetric steady physical flow  with nonzero mass flow rate for a smooth $V_{\rm ext}(r)$ exists in $M$, then $\lim_{r\to 0}V_{\rm ext}(r) =-\infty$.}
\begin{proof}
Let us assume $\lim_{r\to 0}V_{\rm ext}(r)$ is finite. For a physical flow with nonzero mass flow rate ($C_1$ is then finite and positive), as well as the integration constants $C_2$ and $l$ are finite.

$l \neq 0$ is a trivial case. For $l\neq 0$, the limiting value of the angular momentum term of the Bernoulli constant is $\infty$ as $r$ approaches 0.  Since $C_2$ is finite, therefore $\lim_{r\rightarrow 0}V_{\rm ext}(r)$ is finite, which contradicts our initial assumption. 
 
For $l=0$, the expression for $C_1$  along with the expression for the sound speed, gives us
\begin{equation}\label{csv}
c_{s0}^2  (v_{0}^r)^{\gamma -1}  = \frac{C_1'}{r^{\gamma -1}},
\end{equation}
where $C_1'$ is a finite and positive constant $\because K>0$.
Since $\gamma >1$, Eq. \eqref{csv} confirms that the limiting values of 
$c_{s0} (r)$ or $v_{0}^r$ are $\infty$ as $r$ approaches zero.
On the other hand, since $\lim_{r\to 0}V_{\rm ext}(r)$ is assumed to be finite, finite $C_2$ for $l=0$ confirms that both $c_{s0}(r)$ and $v_{0}(r)$ are finite as $r$ approaches zero, contradicting Eq. \eqref{csv}.
Therefore to balance the quantities in the Bernoulli constant $C_2$,
$\lim_{r\to 0}V_{\rm ext}(r)=-\infty$.
\end{proof}
\subsection{Hypersurfaces in a $2+1$D ABH geometry}
Any $2$D smooth hypersurface in a $2+1$D manifold in cylindrical polar coordinates is in general, in a coordinate basis for which $r$ is a position coordinate, defined by 
\begin{equation}
f(t,r,\phi)={\rm constant}.
\end{equation}
Evidently, the tangent vectors $\frac{dx^\mu}{d\lambda}$ ($\lambda$ is parametrizing the tangent curve) at any point on the hypersurface satisfies the equation 
\begin{equation}\label{tangent}
\partial _\mu f \frac{dx^\mu}{d\lambda} =0.
\end{equation}
The above Eq.~\eqref{tangent} 
implies that at any point on the hypersurface $\partial _\mu f$ is normal to the hypersurface.
One interesting 2D hypersurface characterizing the spacetime with the metric \eqref{agmn} is the event horizon. Generalizing the definition of event horizon \cite{carroll_2019} in $3+1$D to $2+1$D, we define it as the $f(t,r,\phi)=r=$constant hypersurface where the gradient $\partial_\mu f =\partial _\mu r$ is null, i.e., $\partial _\mu r \partial ^\mu r =0$. Comparing with Eq.~\eqref{tangent}, we see that at the event horizon, $\partial ^\mu r$ lies on the $r={\rm constant}$ hypersurface, i.e., $\partial ^\mu r \propto \frac{dx^\mu}{d\lambda}$. Therefore, to move on the event horizon (1D circle in $2+1$D geometry) hypersurface, the corresponding trajectory in spacetime must be soundlike. $\partial _\mu r =(0,1,0)$, therefore $\partial _\mu r \partial^\mu r =0$ $\Rightarrow g^{rr}=0$. Finally, the condition for an event horizon at $r=r_H$ of the metric \eqref{agmn}  ($\rho_{0}^{-1}(r_H)\neq 0$ for physical flows), thus reads 
\begin{equation}\label{H1}
(c_{s0}(r_H))^2=(v_{0}^{r}(r_H))^2.
\end{equation}
Since we have a time-independent metric,  the time translation vector $\partial_t$ is a Killing vector $K$, i.e., $K^\mu =(1,0,0)$. The stationary limit surface (or ergosurface) is defined as the surface where $K$ is null \cite{carroll_2019}, therefore $K_\mu K^\mu =0$. Hence the stationary limit surface is determined by the condition $g_{tt} =0$. From  Eq.~\eqref{agmn}, we obtain the condition for a stationary limit surface at $r=r_s$ ($\left(\frac{\rho_{0}}{c_{s0}}\right)^2\neq 0$ within the maximal domain of physical flows): 
\begin{equation}\label{S}
(c_{s0}(r_s))^{2}=(v_{0}^r(r_s))^2+\frac{l^2}{r_s^2}.
\end{equation}
Similar to the definition of a stationary limit surface in $3+1$D, in $2+1$D it is a  hypersurface where to stay at a fixed coordinate position $(r,\phi)$, the corresponding trajectory in spacetime must be soundlike (in usual relativity terminology this of corresponds to soundlike). The difference between a static limit surface and a event horizon can be explained as follows. In the case of a black hole event horizon, $r=$ constant sphere/circle (in 3+1D or 2+1D) is soundlike, i.e., even if the observer rotates on the sphere/circle, still they can not be timelike, or still they can not escape the event horizon; the centrifugal force of rotation can not win against gravity at the horizon. However, in the case of static limit surface, the time axis (movement along time axis) is soundlike, i.e., to stay at a fixed coordinate position, the observer can no more be timelike. To stay at a fixed position in gravity without rotation, we need a force opposing gravity, say the electromagnetic repulsion of the horizontal ground to a person standing on the ground, or if a person wants to float at a fixed point without moving in air, they would need some propeller system to provide the upward thrust. Naturally, with the aid of the centrifugal force, it is easier to stay at a particular radius than to stay at a fixed point in space when there is no ground to stand on. In Newtonian gravity, if the propeller of the observer is strong enough, it is possible to float at a fixed point in space. In the ergoregion in a rotational black hole, this is not possible because of frame dragging. Nevertheless, an object can obtain negative energy (with respect to an observer at infinity) inside the ergoregion, and escape the black hole Ergo region (Penrose process). Therefore the event horizon imposes a stronger condition than a static limit surface in a rotating black hole. An observer or light ray, once inside the ergoregion, can escape from the static limit surface, but once inside the event horizon, no escape is possible anymore.
\medskip

{\bf Theorem 3.} {For an analogue black or white hole spacetime with both an event horizon and a static limit surface, $r_s\geq r_H$.}

\begin{proof}
For an axisymmetric flow to represent an analogue black hole or white hole spacetime, the following conditions have to hold: In the large $r$ region within the domain of physical flow (a) $(c_{s0}(r))^2\ge (v^r_{0}(r))^2$, (b) $(c_{s0}(r))^2 \ge (v^\phi _{0}(r))^2$, where $v_{0}^\phi$ is $\frac{l}{r}$; and in the small $r$ region within the domain of physical flow (c) $(c_{s0}(r))^2\le (v_{0}^{r}(r))^2$, (d) $(c_{s0}(r))^{2}\le (v_{0}^r(r))^2+\frac{l^2}{r^2}$.
The conditions (a) and (c) affirm the existence of an event horizon of finite radius because all the flow variables are continuous functions of $r$, and inside the event horizon the normal vector to a $r=$ constant hypersurface, $\partial _\mu r$, is timelike. The conditions (b) and (d) affirm the existence of an ergosurface of finite radius. Inside the ergosphere, $K^\mu$ is spacelike. With conditions (a)-(d) being satisfied, radially inward or outward flow represents the analogue black hole/white hole space time.
In the small $r$ region where the condition (c) is satisfied, the static limit surface condition can not be satisfied. 
Therefore, $r_s \geq r_H$. Equality holds for $l=0$.
\end{proof} 
The region between $r=r_s$ and $r=r_H$ is the {\it ergosphere}. In a constant time-slice embedding, it is a two-dimensional annular region in a $2+1$D ABH.

\subsection{Calculating the location of the analogue black hole event horizon and Mach number profile}
The function $g(c_{s0}, v^r_{0}, r)$ is not Lipschitz continuous at the event horizon, but there is a way to tackle the problem. In this section, we calculate the event horizon location in our axisymmetric steady flow model with nonzero mass flow rate by following an analytical technique familiar in the astrophysical literature \cite{Muchotrzeb-Czerny1986, a1, Abraham_2006,Bilic_2014, Datta2016}.
For a finite $\frac{dv_{0}^r}{dr}$ at the horizon ($r=r_H$) of the analogue black hole where $v_{0}^r=c_{s0}=c_{sH}$, in Eq.~\eqref{dvdr} we set the numerator to zero so that L'H\^{o}pital's rule can be applied. We then have
\begin{equation}\label{tbr}
V_{\rm ext}^{(1)}(r_H) r_H^3- c_{sH}^2 r_H^2-l^2=0.
\end{equation}
where $V_{\rm ext}^{(1)}$ is the first derivative of $V_{\rm ext}(r)$ with respect to $r$.
Bernoulli's constant at $r=r_H$ gives us,
\begin{equation}\label{csr}
c_{sH}^2=\frac{2(\gamma -1) C_2}{(\gamma +1)}-\frac{(\gamma -1)}{(\gamma +1)}\frac{l^2}{r_H^2}-\frac{2(\gamma -1) V_{\rm ext}(r_H)}{(\gamma +1)}.
\end{equation}
Using Eq. \eqref{csr} and Eq. \eqref{tbr}, we find an equation for $r_H$ with 
a given $V_{\rm ext} (r)$, 
\begin{equation}\label{rhpol}
V_{\rm ext}^{(1)}(r_H) r_H^3-\frac{2(\gamma -1)}{(\gamma +1)}\left(C_2 -V_{\rm ext}(r_H)\right)r_H^2
-\frac{2}{(\gamma +1)}l^2=0.
\end{equation}
Therefore, for a given $V_{\rm ext}$, $r_H=r_H (l, C_2, \gamma)$. Naturally, for an arbitrary $V_{\rm ext}(r)$, multiple solutions of $r_H$ can exist. The reference \cite{Agarwal2012} provides a mathematical study (using the  theory of polynomials) of the dependence of a multi-transonic flow on the parameter space $(l, C_2, \gamma)$ for astrophysical choices of $V_{\rm ext}(r)$.
With this solution of $r_H$ from Eq. \eqref{rhpol}, for a physical flow, $c_{sH}$ has to be real and positive. From Eq. \eqref{tbr}, we derive the following condition:
\begin{equation}\label{ineq}
c_{sH}^2=r_HV_{\rm ext}^{(1)}(r_H)-\frac{l^2}{r_H^2}>0.
\end{equation}
We apply L'H\^{o}pital's rule in Eq.~\eqref{dvdr} to find $\frac{dv_{0}^r}{dr}|_{r=r_H}=q$. We are left with a simple quadratic equation for $q$, 
\begin{equation}\label{qpol}
q^2 + Bq + C=0,
\end{equation} 
 where 
\begin{eqnarray}\label{BC}
& B= \frac{2(\gamma -1)c_{sH}}{r_H},\\
& C=\frac{\frac{\gamma c_{sH}^2}{r_H ^2}+\frac{3l^2}{r_H^4}+V_{\rm ext}^{(2)}(r_H)}{(\gamma +1)},
\end{eqnarray}
$V_{\rm ext}^{(2)}$ denotes the second derivative of $V_{\rm ext}(r)$ with respect to $r$. 
$\Delta=B^2 -4C$ has to be greater than zero so that $q$ is real. Two roots of the quadratic equation \eqref{qpol} lead to two solutions of flow variables with the same initial condition at $r=r_H$.  
For $r_H> 0$, and for real $c_{sH}$ (i.e., inequality Eq.~\eqref{ineq} holds) from the expression of $B$, we see $B>0$. 
$\Delta =0$ is a borderline case between spiral ($\Delta <0$) and nodal or saddle type fixed points ( for nodal or saddle type, $\Delta >0$). Eq. \eqref{continuity}, and Eq. \eqref{vr} near $r=r_H$ can be mapped to a dynamical system with $r=r_H$ being the fixed point \cite{Muchotrzeb-Czerny1986}. For $\Delta > 0$, $C=0$ is the borderline case between saddle type ($C<0$) and nodal type ($C>0$). For a dynamical system, a  linear stability analysis of fixed point fails for the borderline cases \cite{strogatz2018nonlinear}, and one needs to study higher order behavior. $\Delta$ and $C$ in terms of $r_H$ for a given $V_{\rm ext} (r)$ are given by
\begin{eqnarray}
& \Delta = \frac{4}{(\gamma +1)}\left[-V_{\rm ext}^{(2)}(r_H)+\frac{V_{\rm ext}^{(1)}(r_H)}{r_H}\frac{(1-3\gamma)}{(\gamma +1)}-\frac{4l ^2}{r_H^4(\gamma +1)}\right]\nonumber\\
& \\
& C=\frac{1}{(\gamma +1)}\left[ \frac{\gamma}{r_H^2}\left(-\frac{l^2}{r_H^2} +r_H V_{\rm ext}^{(1)}(r_H)\right)+\frac{3l^2}{r_H^4}+V_{\rm ext}^{(2)}(r_H)\right]\nonumber\\
\end{eqnarray}
\begin{corollary}\label{cor2}
If at the lowest positive root $r_H$ of the Eq. \eqref{rhpol}, if the condition \eqref{ineq} is satisfied and if $\Delta >0$ then $\lim _{r\to 0}V_{\rm ext}(r)=-\infty$. 
\end{corollary}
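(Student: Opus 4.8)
The plan is to reduce Corollary~\ref{cor2} to the Corollary of Theorem~\ref{thm2} (existence of a physical flow in $0<r<R$ forces $\lim_{r\to0}V_{\rm ext}=-\infty$) by showing that the three hypotheses together manufacture a genuine transonic physical flow occupying the interval $(0,r_H)$.

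First I would argue that condition \eqref{ineq} upgrades $r_H$ from a mere root of the critical-point polynomial \eqref{rhpol} to a genuine acoustic horizon: it forces $c_{sH}^2>0$, so $c_{sH}$ is real and positive and the horizon condition \eqref{H1} is met by a physical sound speed. Next I would exploit $\Delta>0$: the quadratic \eqref{qpol} for the horizon slope $q=\frac{dv_0^r}{dr}\big|_{r_H}$ then has two real roots. In the dynamical-system picture $r_H$ is a fixed point, and $\Delta>0$ excludes a spiral (complex $q$), which would force the flow to cross the sonic locus repeatedly and is unphysical; what remains is a saddle or node through which an integral curve of \eqref{dcsdr}--\eqref{dvdr} passes with a real, finite slope. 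This integral curve is precisely a transonic physical flow crossing the horizon smoothly from the subsonic to the supersonic branch.

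The central step is to continue this supersonic branch down to $r\to0$. Here I would use that $r_H$ is the \emph{lowest} positive root of \eqref{rhpol}: since \eqref{rhpol} (equivalently the vanishing of the numerator of \eqref{dvdr} where $c_{s0}=v_0^r$) locates every horizon at which the flow can cross the sonic line with finite slope, no such sonic point lies in $(0,r_H)$, so by continuity the flow stays strictly supersonic, $|v_0^r|>c_{s0}$, throughout. The denominator $v_0^r\!\left(1-c_{s0}^2/(v_0^r)^2\right)$ in \eqref{dvdr} therefore never vanishes there, and since $v_0^r$ cannot change sign (finite $C_1,C_2$, as in the proof of Theorem~\ref{thm1}), $f$ and $g$ remain Lipschitz continuous in $(c_{s0},v_0^r)$ at each fixed $r\in(0,r_H)$. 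By Theorem~\ref{thm1} the physical flow then persists on all of $(0,r_H)$, with $\rho_0>0$ and every flow quantity finite at each $r>0$.

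Having produced a physical steady flow with nonzero mass flux in $0<r<r_H$, I would close by invoking the Corollary of Theorem~\ref{thm2} with $R=r_H$, giving $\lim_{r\to0}V_{\rm ext}(r)=-\infty$. The main obstacle will be the continuation argument of the third step: one must verify that the only way Lipschitz continuity, and hence extendability, can fail in the supersonic interior is a recurrence of the sonic condition $c_{s0}=v_0^r$, and that although the source terms $c_{s0}^2/r$ and $l^2/r^3$ inflate $dv_0^r/dr$ as $r\to0$, they do not spoil Lipschitz continuity in the \emph{dependent} variables, so that the supersonic branch genuinely reaches arbitrarily small $r$ rather than terminating at some finite radius where $\rho_0$ and $v_0^r$ cease to be analytic.
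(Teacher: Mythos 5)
Your proposal takes essentially the same route as the paper: the paper's own proof simply asserts that the three hypotheses guarantee an analogue black hole solution existing arbitrarily close to $r=0$ and then invokes Theorem~\ref{thm2}, which is precisely your reduction. Your intermediate steps (real positive $c_{sH}$ from \eqref{ineq}, real horizon slopes from $\Delta>0$, continuation of the supersonic branch down to small $r$ via Theorem~\ref{thm1}, then the corollary of Theorem~\ref{thm2} with $R=r_H$) flesh out what the paper leaves as a one-line assertion, and the continuation caveat you flag at the end—that a sonic turning point with divergent slope is not a root of \eqref{rhpol} and so is not excluded by the lowest-root hypothesis—is likewise left implicit in the paper's argument.
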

\begin{proof}
If at the lowest positive root $r_H$ of the Eq. \eqref{rhpol}, if the condition \eqref{ineq} is satisfied and if $\Delta >0$, an analogue black hole solution exists arbitrarily close to $r=0$; and theorem \ref{thm2} confirms that such a situation occurs only if $\lim _{r\to 0}V_{\rm ext}(r)=-\infty$. 
\end{proof}
 Therefore, choosing an appropriate potential is essential to create an analogue black hole from an axisymmetric flow. 
The simplest possibility of such an appropriate external potential could be 
\begin{equation}
V_{\rm ext}(r)=-\frac{V_0}{r^\alpha},
\end{equation}
where $\alpha >0$, $V_0>0$. We note that this external potential satisfies theorem \ref{thm2}.
We consider $l=0$ first. Therefore, from Eq. \eqref{rhpol},
\begin{eqnarray}\label{rhal}
r_H^\alpha &=\frac{\left((\gamma +1)\alpha-2(\gamma -1)\right)V_0}{2(\gamma -1)C_2}. \\
& \Delta > 0\Rightarrow \alpha >\frac{2(\gamma -1)}{(\gamma +1)}.\label{alga}\\
& C<0 \Rightarrow \alpha > \gamma -1.\label{CC}
\end{eqnarray}
For a BEC ($\gamma =2$), $\Delta > 0 $, $C=0$, we have to rely on numerics to analyze the near horizon behavior, by finding two roots at of $\frac{d v_{(0)}^r}{dr}$ at the horizon, 
cf.~Eq.~\eqref{qpol}. With these initial values, we numerically find the Mach number profile, cf.~Fig. \ref{Mar}.
\begin{figure}[t]
\vspace*{0.5em}
\centering
\hspace*{-0.5em}
\includegraphics[scale=0.2]{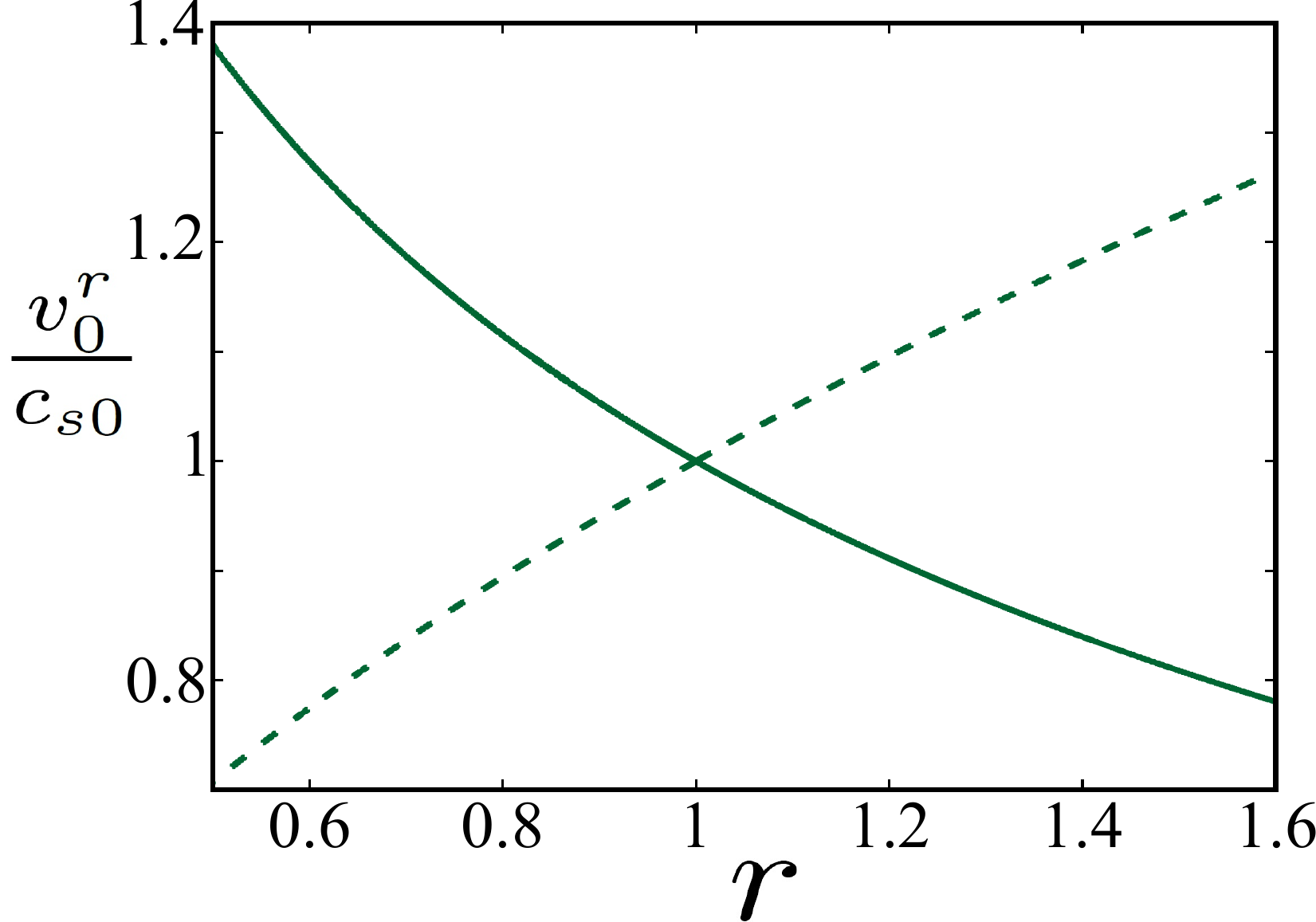}
\caption{{\it Steady radial flow transonic solution.} Solid green lines correspond to an ABH for radial inflow.
Mach number vs $r$; with a fixed $C_1$ and $C_2$, we get two transonic solutions. For a 
BEC with $V_{\rm ext}(r)=-\frac{1}{r}$ and $C_2=0.5$, we have $r_H=1$, $c_{sH}=1$}
\label{Mar}
\end{figure}

\subsection{Null   geodesics traveling away from the trapped surface}
The orthogonal direction to a $r=$ constant hypersurface is $\pm\partial_\mu r=(0,\pm 1, 0)$ according to Eq. \eqref{tangent}; since the spacetime is stationary, this radially outward or inward direction to the $r=$ constant hypersurface is time independent. 
Null geodesics satisfy $ds^2=0$, giving a Galilean velocity addition rule
\begin{equation}\label{soundray}
c_{s0}{\bm e}_n=\frac{d\vec{ x}}{dt} -\left(-v_{0}{\bm e}_r+\frac{l}{r}{\bm e}_\phi\right),
\end{equation}
$v_0^r=-v_0$, ${\bm e}_n$ is a unit direction, specifying the net direction of a null geodesic $\frac{d\vec{x}}{dt}$, and can be any unit vector in the tangent space on the $x-y$ plane at constant $t$; evidently, different null geodesics have different ${\bm e}_n$s. We rewrite Eq.~\eqref{soundray} as follows 
\begin{equation}\label{nulld}
\frac{d\vec{ x}}{dt}=\left(c_{s0}({\bm e}_n\cdot{\bm e}_r)-v_{0}\right){\bm e}_r+\left(c_{s0}({\bm e}_n\cdot{\bm e}_\phi)+\frac{l}{r}\right){\bm e}_\phi.
\end{equation}
Here, $\cdot$ is the dot product of two vectors in Euclidean space.
We observe from Eq.~\eqref{nulld} that at a fixed $r$, null geodesics with ${\bm e}_n=\partial_\mu r={\bm e}_r$ have the maximum $\frac{d\vec{ x}}{dt}$ along the radially outward ($+{\bm e}_r$) direction. This is the direction along which null geodesics diverge maximally if the $r=$ constant hypersurface is outside the event horizon.  Since inside the event horizon, $c_{s0}<v_{0}$, the outward pointing null geodesic orthogonal to $T^1$ also converges (radial component of $\frac{d\vec{x}}{dt}$ is negative) initially at $T^1$. Therefore, any $r=$ constant hypersurface at constant time slice in the $\frac{v_{0}}{c_{s0}}>1$ region is a one-dimensional trapped surface in the analogue spacetime, representing the generalization of Penrose's definition of two-dimensional trapped surfaces inside a $3+1$D black hole horizon.

For $l=0$, null geodesics along $\pm{\bm e}_r$ satisfy
\begin{equation}\label{rio}
\frac{dr}{dt}=-v_{0}\pm c_{s0}.
\end{equation}
Since for pure radial flow, no velocity is added into the null geodesics in the ${\bm e}_\phi$ direction, null geodesics with ${\bm e}_n\cdot{\bm e}_\phi=0$ at $t=0$ satisfy Eq.~\eqref{rio} for all time, i.e., ${\bm e}_n\cdot{\bm e}_\phi$ remain zero $\forall t$. 
If we assume flows corresponding to the analogue black hole solution exist in the domain $M$, to reach $r\rightarrow 0+$ from a trapped surface of radius $r_T$ requires the time 
\begin{equation}\label{reachingS}
\Delta t_{\pm}=\lim_{r_0\to 0+}\int_{r_T}^{r_o} dr \frac{1}{-v_{0}\pm c_{s0}}.
\end{equation}
This integral is finite for physical flows. We also note that $\Delta t_+>\Delta t _-$.

We now calculate the corresponding 
affine parameter interval for radial rays for the ABH with $l=0$.  We first specify the null geodesic curves by the time variable $t$. Not knowing whether laboratory time $t$ is an affine parameter or not, the geodesic equation is in general given by \cite{carroll_2019}
\begin{equation}\label{GE}
\frac{d^2 x^\mu}{dt ^2}+\Gamma ^{\mu}_{\nu\sigma}\frac{dx^\nu}{dt}\frac{dx^\sigma}{dt}=f(t)\frac{dx^\mu}{dt},
\end{equation}
where $+\Gamma ^{\mu}_{\nu\sigma}$ is the Christoffel symbol.
If $t$ happens to be an affine parameter then $f(t)$ would be zero at any $t$. In our $2+1$D  system, $x^\mu \equiv (t,r,\phi)$. For radial null geodesics $\frac{d\phi}{dt}=0$. Therefore, putting $x^\mu =t$ in Eq.~\eqref{GE}, we get
\begin{equation}
f(t)=\Gamma ^t _{tt}+2\Gamma ^{t}_{tr} \frac{dr}{dt}+\Gamma ^t_{rr} \left(\frac{dr}{dt}\right)^2.
\end{equation}
The affine parameter $\lambda $ is related to $t$ by
\begin{equation}
\frac{d^2\lambda}{dt^2}=f(t)\frac{d\lambda}{dt}.
\end{equation}
Therefore, in functional form we have 
\begin{equation}
\lambda =\int \exp \left(\int f(t) dt \right) dt.
\end{equation}
We now make a change of variable $t\rightarrow r$,
\begin{equation}\label{affinet}
\int f(t) dt =\int \frac{F(r)}{(dr/dt)} dr.
\end{equation}
$\frac{dr}{dt}$ is a function of $r$ for ingoing and outgoing null geodesics, cf.~Eq. \eqref{rio}, and $F(r)$ is the functional form of $f(t)$ under the change in variable $t\rightarrow r$.
For the metric \eqref{agmn},
\begin{eqnarray}
\Gamma ^{t}_{tt}&=&-\frac{1}{2}g^{tr}\frac{dg_{tt}}{dr},\\
\Gamma ^{t}_{tr}&=&\frac{1}{2}g^{tt}\frac{dg_{tt}}{dr},\\
\Gamma ^{t}_{rr}&=&g^{tt}\frac{d g_{tr}}{dr}+\frac 12 g^{tr}\frac{dg_{rr}}{dr}.
\end{eqnarray}
with $v_0^r=-v_0$ in the metric \eqref{agmn}, we find for an ingoing ray 
\begin{equation}
f(t)\equiv F(r)=g^{tt}\Omega ^2\left[c_{s0}^2\frac{d(c_{s0}+v_{0})}{dr}+c_{s0}(c_{s0}+v_{0})\frac{dc_{s0}}{dr}\right]+g^{tt}c_{s0}^2\left(c_{s0}+v_{0}\right)\frac{d\Omega ^2}{dr},
\end{equation}
where $\Omega ^2$ is the conformal factor $\left(\frac
{\rho_0}{c_{s0}}\right)^2$ in the metric \eqref{agmn}. 
Then we find from Eq.~\eqref{affinet} the functional form of $\lambda =\lambda (r)$
\begin{equation}\label{lin}
\lambda = -\int \Omega ^2 c_{s0} dr.
\end{equation}
Similarly, for an outgoing ray,
\begin{equation}\label{lout}
\lambda = \int \Omega ^2 c_{s0} dr.
\end{equation}
The overall sign is immaterial, because an affine parameter remains an affine parameter under an affine transformation $\lambda \rightarrow (a\lambda +b)$, where $a$ and $b$ are constants.

\subsection{Null expansion for radial null geodesics inside trapped surface}
The null expansion $\Theta$ for a null geodesic $l^\mu$ in affine parametrization is defined as
\begin{equation}
\Theta =l^\mu_{~;\mu},
\end{equation}	
where $l^\mu=\frac{dx^\mu}{d\lambda}$, $\frac{dx^\mu}{d\lambda}$ is tangent vector for the null geodesic and 
$\lambda$ is affine parameter. A radial null geodesic, parametrized by $t$, is given by
\begin{equation}
\tilde{l}^\mu_{\pm} = (1, -v_0 \pm c_{s0},0),
\end{equation}
where $\pm$ sign specifies radially out/inward pointing null geodesic congruences, respectively.
We have $d\lambda=-c_{s0}^3 dr$ (see main text), and we conventionally employ the minus sign 
because in our ABH, inside the black hole horizon, both radially out- and inward pointing null geodesics move radially inside. We choose $\lambda $ to increase as $r$ decreases for these radial rays.
We find that a null vector inside black horizon in affine parametrization reads 
\begin{equation}
l^\mu_{\pm} = \mp\frac{1}{c_{s0}^3(c_{s0} \mp v_0)}\tilde{l}^\mu_{\pm}=\mp\frac{1}{c_{s0}^3(c_{s0} \mp v_0)}(1, -v_0\pm c_{s0},0).
\end{equation}
In the ABH spacetime \eqref{agmn} with $l=0$, $v_0 ^r =-v_0$, 
$\Theta $ is identical for radially outward as well as for radially inward pointing null geodesics, and is given by
\begin{equation}
\Theta = \frac{1}{2}\left(-\mathscr{M}+\frac{r}{c_{s0}}\frac{dv_0}{dr}\right).
\end{equation}
Here, $\mathscr{M}$ is Mach number, i.e., $\frac{v_0}{c_{s0}}$.
Using the expression of $\frac{dv_0}{dr}$ in Eq.~\eqref{dvdr}, we find 
\begin{equation}
\Theta = \frac{1}{2}\left(-\mathscr{M}+\frac{\frac{1}{\mathscr{M}}-\frac{1}{rv_0c_{s0}}}{\left(1-\frac{1}{\mathscr{M}^2}\right)}\right).
\end{equation}
We know that $c_{s0}<1/r^{1/3}$ in the supersonic flow region. Using Eq.~\eqref{csv} for $\gamma =2$, we get 
\begin{equation}
\mathscr{M}=\frac{1}{c_{s0}^3 r}.
\end{equation}
Therefore, we have 
\begin{equation}
\lim _{r\to 0}\mathscr{M}=\infty 
\end{equation}
and 
\begin{equation}
\lim _{r\to 0}\Theta = -\infty.
\end{equation}

\subsection{Singularity theorem in non-Einsteinian gravity}
The ABH in the main text arises from an axisymmetric flow, we now proceed to a  singularity theorem for a general spacetime.

{\bf Theorem 4.} {If a spacetime with a noncompact Cauchy hypersurface contains a trapped surface, and if the Ricci tensor contraction $R_{\mu\nu}l^\mu l^\nu \geq 0$ for any null vector $l^\mu$, then the spacetime is null geodesically incomplete.}
\begin{proof}
The proof proceeds similar the one given by Witten in his review \cite{Witten} except that here we do not  have Einstein equations determining the metric. 
The Raychaudhuri equation, for a  null geodesic congruence in a spacetime of dimension $D$ ($>2$) is given by \cite{Witten,Kar2007},
\begin{equation}\label{LRC}
\dot{\Theta}+\frac{\Theta ^2}{(D-2)}=\omega ^2 -\sigma ^2-R_{\mu\nu}l^\mu l^\nu,
\end{equation}
where the semicolon $_{;}$ represents covariant derivative, $l_\mu l^\mu =0$, $\Theta = l^\mu_{~;\mu}$, $\sigma _{\mu\nu}=\frac{1}{2}\left(l_{\mu_;\nu}+l_{\nu;\mu}\right)-\frac{1}{(D-2)}h_{\mu\nu}\Theta$, $h_{\mu\nu}=(g_{\mu\nu}+l_\mu N_\nu + l_\nu N_\mu)$, $N_\mu N^\mu =0$, $N^\mu l_\mu =-1$, $\omega _{\mu\nu}=\frac{1}{2}\left(l_{\mu_;\nu}-l_{\nu;\mu}\right)$, $\sigma ^2 =\sigma ^{\mu\nu}\sigma _{\mu \nu}=\sigma ^\mu _{~\nu} \sigma ^\nu _{~\mu} \geq 0$, $\omega ^2 =\omega ^\mu _{~\nu} \omega ^\nu _{~\mu} \geq 0$. $\dot{\Theta}=\frac{d\Theta}{d\lambda}$ where $\lambda$ is an affine parameter parametrizing a null geodesic.

We notice if one were to use the Einstein equations $R_{\mu\nu}l^\mu l^\nu=8\pi T_{\mu\nu}l^\mu l^\nu$, in units with $G=c= 1$. Therefore, the null energy condition $T_{\mu\nu}l^\mu l^\nu\geq 0$ is equivalent to $R_{\mu\nu}l^\mu l^\nu \geq 0$ when the Einstein equations apply. 

It is possible to choose a coordinate system such that $\omega _{\mu\nu}$ vanishes \cite{Witten}; therefore, in that coordinate system, the sign of the right-hand side of 
Eq.~\eqref{LRC} is certainly negative if $R_{\mu\nu}l^\mu l^\nu \geq 0$. All the next steps of the proof 
have been given by Witten in \cite{Witten}. The null geodesics truncate but {\it do not intersect} when $\Theta$ goes to minus infinity (defining a focal point as a point where $\Theta$ is minus infinity \cite{Witten}), at a finite affine parameter value $\frac{(D-2)}{\Theta _0}$. Here, $\Theta =-\Theta _0$, with $\Theta _0 >0$ on the trapped surface. where one starts with zero affine parameter value. The  existence of a trapped surface ensures that the null expansion $\Theta $ is negative on the trapped surface. 
In our case of $D=3$, the affine parameter value would then be bounded by $\frac{1}{\Theta _0}$.
\end{proof}
Our singularity theorem  does not employ the Einstein equations 
as in \cite{Penrose65PRL,Hawking0, Hawking1, Hawking2, Hawking3,Hawking1970singularities}. 
For sonic spacetimes or other non-Einsteinian theories of gravity,
we can not relate the null convergence factor $R_{\mu\nu} l^\mu l^\nu$ to the null energy condition: 
In fluid-dynamical analogue gravity, while an Einstein tensor
can as usual be constructed from the metric, an energy-momentum tensor
 $T_{\mu\nu}$  appearing on the right-hand side of would-be analogue Einstein equations can not be identified.  
We therefore restrict ourselves to employing the null convergence condition $R_{\mu\nu}l^\mu l^\nu  \geq 0$ 
instead of the null energy condition $T_{\mu\nu}l^\mu n^\nu \geq 0$.  
Then a singularity theorem deriving from the Raychaudhuri equation \cite{RC,Hawking_Ellis_1973,dadhichderivation,Kar2007}, 
 that is purely from the geometry of the spacetime, 
without invoking the Einstein equations, can be formulated.

However, the absence of the null energy condition makes the theorem less powerful. In the case of Penrose singularity problem in Einstein gravity, given a spacetime, we need to identify a trapped surface, and then we need to declare $T_{\mu\nu}$ satisfies a physical condition, i.e., the null energy condition (or the more restrictive weak energy condition), and that would guarantee null geodesic incompleteness. There is of course in principle the possibility of exotic matter so that Penrose's singularity is avoided.  However, in our singularity theorem, we need to calculate $R_{\mu\nu}l^\mu l^\nu$ for all null geodesics or at least for a {\it suitable} null geodesic congruence to check whether the congruence encounters a caustic within a finite affine parameter interval from the trapped surface. The Raychaudhuri equation for null geodesic congruence guarantees if  $R_{\mu\nu}l^\mu l^\nu\geq 0$, the null geodesic congruence would encounter a caustic 
within a finite affine parameter interval.  If the spacetime is symmetric enough, from this symmetry we can choose suitable null geodesic congruences, and then we calculate the affine parameter for those null geodesic congruences. The work \cite{visseraffine} describes the computation of the affine parameter in spacetime with symmetries. After finding the affine parameter, one can calculate the null expansion $\Theta $ for those chosen null geodesic congruences instead of calculating $R_{\mu\nu}l^\mu l^\nu$, as we did in our ABH example. For a null geodesic congruence, whether computing $\Theta $ instead of computing $R_{\mu\nu}l^\mu l^\nu$ would be computationally easier  depends on the given spacetime and the chosen null geodesic congruence.  We must also note that the Penrose singularity theorem as well as our singularity theorem for non-Einsteinian gravity provide necessary, but not sufficient conditions for null geodesic incompleteness at the caustic within a finite affine interval  into the future, starting from a trapped surface, i.e., for a 
Penrose-type singularity. There is always the possibility that $R_{\mu\nu}l^\mu l^\nu <0$ but $R_{\mu\nu}l^\mu l^\nu +\sigma ^2 \geq 0$, and this possibility would also lead to a Penrose-type singularity. Indeed, $R_{\mu\nu}l^\mu l^\nu +\sigma ^2 \geq 0$ is not considered as any singularity theorems' condition \cite{Wald} because $R_{\mu\nu}l^\mu l^\nu +\sigma ^2 $ requires computation of $\sigma _{\mu\nu}$, and $\sigma_{\mu\nu}$ contains $\Theta $. If $\Theta $ is computed, there is no need to compute $R_{\mu\nu}l^\mu l^\nu +\sigma ^2$, because $\Theta$ in affine particularization will directly provide the information about caustic points in a given spacetime. We observe that the axisymmetric ABH that we considered in the main text actually does not satisfy the null convergence condition of Theorem 4 for radial null geodesic congruences; however, still the radial null geodesics from a trapped surface converge at $r=0$ in a finite affine interval, as we concluded by directly calculating $\Theta$ itself.

\end{widetext}

\end{document}